\providecommand{\U}[1]{\protect\rule{.1in}{.1in}}
\newtheorem{theorem}{Theorem}
\newtheorem{proposition}[theorem]{Proposition}
\newenvironment{proof}[1][Proof]{\noindent\textbf{#1.} }{\ \rule{0.5em}{0.5em}}
\begin{document}
%
\sectionfont{\bfseries\large\sffamily}%
%

\subsectionfont{\bfseries\sffamily\normalsize}%
%

\noindent
{\sffamily\bfseries\Large Causal inference with two versions of treatment}%
%

\noindent
 \textsf{Raiden B. Hasegawa, Sameer K. Deshpande, Dylan S. Small and Paul R.
 Rosenbaum}\footnote{Raiden Hasegawa and Sameer K. Deshpande are PhD students
 and Dylan Small and Paul Rosenbaum are professors in the Department of
 Statistics, Wharton School, University of Pennsylvania, Philadelphia, PA
 19104-6340 US.\ 31 May 2018. \textsf{raiden@wharton.upenn.edu}.}%


\noindent
\textsf{Abstract. \ Causal effects are commonly defined as comparisons of the
potential outcomes under treatment and control, but this definition is
threatened by the possibility that either the treatment or the control
condition is not well-defined, existing instead in more than one version.
\ This is often a real possibility in nonexperimental or observational studies
of treatments, because these treatments occur in the natural or social world
without the laboratory control needed to ensure identically the same treatment
or control condition occurs in every instance. \ We consider the simplest
case: either the treatment condition or the control condition exists in two
versions that are easily recognized in the data but are of uncertain, perhaps
doubtful, relevance, e.g., branded Advil vs. generic ibuprofen. \ Common practice does not address versions of treatment:
typically the issue is either ignored or explicitly stated but assumed to be
absent. \ Common practice is reluctant to address two versions of treatment
because the obvious solution entails dividing the data into two parts with two
analyses, thereby (i) reducing power to detect versions of treatment in each part, (ii) creating problems of
multiple inference in coordinating the two analyses, (iii) failing to report a
single primary analysis that uses everyone. \ We propose and illustrate a new
method of analysis that begins with a single primary analysis of everyone that
would be correct if the two versions do not differ, adds a second analysis
that would be correct were there two different effects for the two versions,
controls the family-wise error rate in all assertions made by the several
analyses, and yet pays no price in power to detect a constant treatment effect in the primary analysis of everyone. \
Our method can be applied to analyses of constant additive treatment effects on
continuous outcomes. \ Unlike conventional simultaneous inferences, the new method is coordinating
several analyses that are valid under different assumptions, so that one
analysis would never be performed if one knew for certain that the assumptions
of the other analysis are true. \ It is a multiple assumptions problem, rather
than a multiple hypotheses problem. We discuss the relative merits of the 
method with respect to more conventional approaches to analyzing multiple comparisons.
\ The method is motivated and illustrated using a study of the possibility 
that repeated head trauma in high school football causes an increase in risk 
of early on-set cognitive decline.}%

\noindent
\textsf{Keywords: \ Causal effects, closed testing, full matching,
intersection-union test, randomization inference, sensitivity analysis,
versions of treatment.}

\section{What are versions of treatment?}

Commonly, the effect on an individual caused by a treatment is defined as a
comparison of the two potential outcomes that this individual would exhibit
under treatment and under control; see Neyman (1923), Welch (1937) and Rubin
(1974). \ Implicit in this definition is the notion that the treatment and
control conditions are each well-defined. \ In particular, it is common to
assume that there are \textquotedblleft no versions of treatment or
control\textquotedblright; see Rubin (1986).\

By definition, versions of treatment are not intended additions to a study
design, but rather potential flaws in the study design. \ Versions of
treatment or control are often associated with finding treatments that occur
naturally, rather than experimentally manipulating a tightly controlled,
uniform treatment. \ Versions of one treatment should be distinguished from
the intentional study of distinct, competing treatments. \ When an
investigator discusses versions of one treatment, she is expressing a
preference for the conception that there is a single treatment, but is
acknowledging the possibility that her preferred conception is mistaken.
\ Branded Advil and generic ibuprofen are versions of one treatment ---
possibly different, but very plausibly expected to be the same --- whereas
ibuprofen and aspirin are different competing treatments. \ The investigator
and her audience prefer a primary analysis that does not distinguish versions
of treatment, but both would be reassured by evidence that showed their
preferred analysis does not embody a consequential error. \ Versions of
control groups should also be distinguished from the deliberate use of two
carefully selected control groups intended to reveal unmeasured biases if
present; see, for instance, Rosenbaum (1987). \ In particular, Campbell (1969)
suggested that two control groups should be deliberately selected to
systematically vary a specific unmeasured covariate in an effort to
demonstrate its irrelevance; however, versions of control are unintended flaws
in study design, not purposeful quasi-experimental devices.

There are two versions of either the treatment condition or the control
condition if we recognize in available data either two types of treated
subjects or two types of controls, but we are uncertain about, or perhaps
explicitly doubt, the relevance of this visible distinction. \ Versions refer
to a visible but perhaps unimportant distinction, not to a distinction that is
hidden or latent. \ There are important methodological issues in recognizing
treatments that inexplicably affect some people but not others; however, this
is practically and mathematically a different problem (Conover and Salsberg
1988; Rosenbaum 2007a).

In discussing randomized clinical trials, Peto et al. (1976, page 590-1)
wrote: \textquotedblleft A positive result is more likely, and a null result
is more informative, if the main comparison is of only 2 treatments, these
being as different as possible. \ldots\ [I]t is a mark of good trial design
that a null result, if it occurs, will be of interest.\textquotedblright%
\ \ This advice is equally relevant for observational studies, and it is part
of the reason that we prefer a conception in which there is a single treated
condition and a single control condition. \ Despite this, an investigator may
seek some reassurance that the study's conclusions cannot be undermined by the
possibility of two versions of treatment.

In that spirit, our analysis focuses on the main treatment-control comparison,
and subordinates the study of versions of treatment or versions of control.
\ In particular, the main treatment-control comparison is unaffected by the
exploration of versions of treatment --- the usual confidence interval for a
constant effect is reported --- despite controlling the family-wise error rate
in multiple comparisons that explore the possibility of versions of treatment
with different effects. \ Two confidence intervals are reported, the usual
interval for a constant effect and an interval designed to contain both
effects if the two versions differ. \ If the effect is constant, then both
intervals simultaneously cover that one effect with probability $\geq1-\alpha
$, but if there are two versions then the second interval covers both version
effects with probability $\geq1-\alpha$. \ The investigator always reports
both intervals, valid under different assumptions. \ This is an unusual type
of simultaneous inference: there is essentially one question, but there are
two sets of assumptions underlying the answer, so one question is answered
twice, as opposed to answering several different questions. \ There are
multiple assumptions rather than multiple hypotheses. \ The two intervals
together permit an investigator to report the conventional confidence interval
for a constant effect, without lengthening it for multiple testing, yet the
investigator also provides some information about whether the study's
conclusions depend on the absence of versions of treatment. \ The two
intervals may possibly disagree, say about whether no effect is plausible, and
if they do disagree then they demonstrate that the assumption about versions
of treatment is playing an important role in the interpretation of the
available data. \ Importantly, the method does not presume there is a single version of treatment by virtue of
failing to reject the null hypothesis that the two versions are equal. That is, it 
entertains the possibility that there are versions of treatment even when the convetional 
interval assuming a constant treatment effect is not empty. 

\section{ Possible versions of control in a study of football and dementia}

\label{secFootballExample}

There is evidence that severe repeated head trauma accelerates the on-set of
cognitive decline or dementia (Graves et al. 1990, Mortimer et al. 1991), with
specific concern about the risks faced by professional football players and
boxers (McKee et al. 2009, Lehman et al. 2012). \ It is unclear whether there
is also increased risk from playing football on a team in high school, but
there have been several recommendations against tackle football in high school
(Bachynski 2016, Miles and Prasad 2016). \ Does high school football
accelerate the on-set of cognitive decline?

A recent investigation used data from the Wisconsin Longitudinal Study,
comparing cognition and mental health measured at age 65 and 72, recorded in
2005 and 2011, of men who played football on a high school team in the mid 1950's
to male controls of similar age who did not play football (Deshpande et al. 2017). 
\ Following the practice in clinical trials, and as is recommended 
for observational studies by Rubin (2007), the design and protocol for this 
study were published on-line after
matching was completed but before outcomes were examined (Deshpande et al.
2016, \textsf{arXiv preprint arXiv:1607.01756}). \ The small number of people
who engaged in sports other than football with high incidences of head trauma
such as soccer, hockey, and wrestling were excluded from both the football and
control groups. \ One outcome was the 0-10 score on a ten item delayed word
recall (DWR) test at ages 65 and 72. \ The delayed word recall test was
designed as an inexpensive measure of memory loss associated with dementia;
see Knopman and Ryberg (1989). \ In this test, a person is asked to remember a
list of words that is then read to the person. \ Attention then shifts to
another activity, and after a delay, the person is asked to recall as many
words from the list as possible. \ The DWR score is the number of words
remembered. \ On average, in the Wisconsin Longitudinal Study, performance on
the delayed word recall test declined by half a word from age 65 to age 72.
\ It is useful to keep that half-word, 7-year decline in mind when thinking
about the magnitude of the effect of playing football.

A comparison of football players to all controls is natural, and might be
conducted without second thought. Among the controls, however, some played a
non-collision sport like baseball or track while others played no sports at
all. An investigator might reasonably seek reassurance that this natural
comparison has not oversimplified these two version of \textquotedblleft not
playing football.\textquotedblright\ At the same time, the investigator does
not want to sacrifice power to detect a constant treatment effect 
in the main comparison en route to obtaining this
reassurance by subdividing the data into many slivers of reduced sample size
and correcting for multiple comparisons. The method we propose achieves both
of these objectives.

Our question concerns the effects of high school football. \ It is important
to distinguish this question from questions about the effects of severe head
trauma in general. \ It is at least conceivable that high school football is
comparatively harmless, while severe head trauma is not, simply because severe
head trauma is not common in high school football, and the benefits of
exercise for all football players offset the harm of severe but rare trauma.
\ Conversely, severe head trauma from automotive or other accidents may be
difficult to prevent, but if high school football had grave consequences, then
it could simply be banned, in the same way that most high schools do not have
boxing teams. \ We ask about the effects of playing football in high school on
subsequent cognitive function.

The Wisconsin Longitudinal Study describes a specific piece of the US over a
specific period of time, and caution is advised about extrapolating its
conclusions to other times and places. \ High School football may have changed
since the 1950's, and the demographic composition of Wisconsin in the 1950's
is not the demographic composition of the US. \ The Wisconsin Longitudinal
Study is primarily a sequence of surveys, and it is impossible to use it to
investigate questions not asked in those surveys. \ For instance, we cannot
identify high school students who went on to play professional football, but
we suspect they were few in number. \ Because many young people play high
school football, the safety of high school football is an important question
apart from the safety of professional football.

\section{Full matching of football players and controls}

We matched the 591 male football players to all 1,290 male controls who did
not play football and did not play a contact sport. The match controlled for
several factors that may affect later-life cognition, including the student's
IQ score in high school, their high school rank-in-class recorded as a
percent, planned years of future education, as well as binary indicators of
whether teachers rated him as an exceptional student, and whether his teachers
and parents encouraged him to pursue a college education. We also accounted
for aspects of family background like parental income and education.

The match was a \textquotedblleft full match,\textquotedblright\ meaning that
a matched set could contain one football player and one or more controls, or
else one control and one or more football players. \ A full match is the form
of an optimal stratification in the sense that people in the same stratum are
as similar as possible subject to the requirement that every stratum contain
at least one treated subject and one control; see Rosenbaum (1991). \ Although
the proof of this claim requires some attention to detail, the key idea is
simple: if a matched set contained two treated subjects and two controls, it
could be subdivided into two matched sets that are at least as close on
covariates and are typically closer. \ 
\ See Hansen and Klopfer (2006) for an algorithm for optimal full
matching, Hansen (2007) for software, and Hansen (2004) and Stuart and Green
(2008) for applications. \ The match was constructed using Hansen's
\texttt{optmatch} package in \texttt{R} with the ratio of
controls to treated units constrained between 1:6 and 6:1 to avoid excessively
large matched sets.

In a full match, there are $I$ matched sets, $i=1,\ldots,I$ and $n_{i}$
individuals, $j=1,\ldots,n_{i}$, in set $i$. \ If individual $ij$ played on a
football team in high school, write $Z_{ij}=1$; otherwise, write $Z_{ij}=0$.
\ The number of football players in set $i$ is $m_{i}=\sum_{j=1}^{n_{i}}%
Z_{ij}$, the total number of individuals is $N=\sum_{i=1}^{I}n_{i}$, and the
total number of football players is $M=\sum_{i=1}^{I}m_{i}$. \ In a full
match, $\min\left(  m_{i},\,n_{i}-m_{i}\right)  =1$ for every $i$.

To explore versions of treatment, we constructed three matched samples. \ Each
sample used all $M=591$ football players. \ The first matched sample used all
controls, that is, every male who played neither football nor another contact
sport. \ The second matched sample used only controls who did not play any
sport. \ The third matched sample used controls who played a non-collision
sport, such as baseball. In each match, controls and football players belong\
to at most one matched set.\ Table
\ref{tabCounts}
describes the structure of the three matched samples, giving the frequency of
sets of size $\left(  m_{i},\,n_{i}-m_{i}\right)  $, as well as the number of
sets, $I$, the number of individuals, $N$, and the number of football players,
$M$. \ Obviously, the samples overlap extensively, because they all use all
$M=591$ football players and no controls were discarded in forming the optimal
full matchings; however, the three matches differ in structure,
partly because there were only $N-M=975-591=384$ controls who played a
non-collision sport in the third match. \ In all three matches, adequate covariate 
balance was achieved with nearly all standardized differences in baseline covariates
between football players and controls less than 0.2. Details of very similar matches
can be found in Deshpande et al. (2017).

\begin{table}[ht]
\caption{Distribution of matched set sizes, ($m_{i}$, $n_{i}-m_{i}%
$),  in three full matches.  A 2-1 set contains two treated
individuals and one control, while a 1-2 set contains one treated individual and two controls. There
are $I$ matched sets, containing a total of $N$ individuals, and each match includes all $M=591$
football players.}.\label{tabCounts}
\centering\begin{tabular}{l | rrrrrrrr | rrr}
\hline Comparison &  \multicolumn{8}{|c}{ (Treated Count)-(Control-Count) }
&  \multicolumn{3}{|c}{ Totals } \\ \hline
& 3-1 & 2-1 & 1-1 & 1-2 & 1-3 & 1-4 & 1-5 & 1-6   & $I$ &  $N$ & $M$ \\ \hline
Football vs. Control & 0 & 0 &  401 &  32 &  26 &  14 &  17 & 101  & 591  & 1881 & 591 \\
Football vs. No sport & 70 & 6 & 240 &  29 &  15 &  10 &   3 &  72 & 445 & 1497 & 591 \\
Football vs. Other sport & 90 & 43 & 227 &   3 &   2 &   3 & 0 & 0 & 368 & 975 & 591 \\
\hline\end{tabular}
\end{table}%

In studying the effects of a treatment --- here, high school football --- it
is typically inappropriate to adjust for events subsequent to the start of
treatment, as this may introduce bias even where none existed prior to
adjustments, because part of the treatment effect may be removed (Rosenbaum
1984). However, there are certain adult health outcomes that may be different between
football players and controls due to disparities in unmeasured baseline health
characteristics rather than an effect of playing football. This may threaten the validity 
of our study if these baseline health characteristics also play a role in later-life cognitive health. 
Comparing these health outcomes may can be used, at least partially, to assess the comparability of the baseline 
health of the comparison groups. \ We checked on the health status of football players
and matched controls at age 65 using the Mantel-Haenszel procedure, failing to
find a difference significant at the 0.05 level for \textquotedblleft ever had
high blood pressure,\textquotedblright\ \textquotedblleft ever had
diabetes,\textquotedblright\ and \textquotedblleft ever had heart
problems\textquotedblright. \ Football players were more likely to report that
they had \textquotedblleft ever had a stroke,\textquotedblright\ with a
$P$-value of 0.03, and a 95\% confidence interval for the odds ratio of
$\left[  1.09,\,3.21\right]  $. \ Extensive comparisons of this kind are
reported in Deshpande et al. (2017).

\section{Review of randomization inference without versions of treatment}

\label{secInferenceWithoutVersions}

If there were a single version of treatment or control, then individual $ij$
would have two potential delayed word recall scores, $Y_{ij}(1)$ if he played
football and $Y_{ij}(0)$ if he did not, where we observe only one of these,
namely $Y_{ij}=Z_{ij}\,Y_{ij}(1)+\left(  1-Z_{ij}\right)  \,Y_{ij}(0)$, and the
effect caused by playing football, namely $\delta_{ij}=Y_{ij}(1)-Y_{ij}(0)$, is
not observed for any individual; see Neyman (1923) and Rubin (1974).
\ Fisher's (1935) sharp null hypothesis of no effect says $H_{0}%
:Y_{ij}(1)=Y_{ij}(0)$, $i=1,\ldots,I$, $j=1,\ldots,n_{i}$, which we henceforth
abbreviate as $H_{0}:Y_{ij}(1)=Y_{ij}(0)$, $\forall i,j$ or as $H_{0}:\delta
_{ij}=0$, $\forall i,j$. \ The treatment has an additive constant effect if
there exists some constant $\tau$ such that $\delta_{ij}=Y_{ij}(1)-r_{Cij}=\tau
$, $\forall i,j$. \ The hypothesis $H_{\tau_{0}}$ specifies a particular
numerical value $\tau_{0}$ for $\tau$ and asserts $H_{\tau_{0}}:\delta
_{ij}=\tau_{0}$, $\forall i,j$, and it is manifested in the observable
distribution of $Y_{ij}$ by a within-set shift in the distribution of $Y_{ij}$
by $\tau_{0}$. \ If $H_{\tau_{0}}$ were true, then $Y_{ij}-\tau_{0}%
\,Z_{ij}=Y_{ij}(0)$ would satisfy Fisher's hypothesis of no effect, $H_{0}$, and
it is commonplace to test $H_{\tau_{0}}$ by replacing $Y_{ij}$ by $Y_{ij}%
-\tau_{0}\,Z_{ij}$ and testing $H_{0}$.

Until \S \ref{secSensitivity}, we restrict attention to random assignment of
treatments within matched sets; however, \S \ref{secSensitivity} considers
sensitivity of inferences to departures from this assumption. \ Of course,
people do not decide to play football at random, so \S \ref{secSensitivity} is
closer to reality than random assignment. \ Fisher (1935), Pitman and Welch
(1937) used the randomization distribution of the mean difference to test
Fisher's $H_{0}$, and we follow this approach with the short-tailed delayed
word recall scores (DWR), only briefly comparing the mean to a robust
$M$-statistic. \ The mean is one $M$-statistic, but not a robust one.
\ Because the matched sets are of unequal sizes, $\left(  m\,_{i}%
,\,n_{i}-m_{i}\right)  $, we compute the treated-minus-control mean difference
in DWR scores within each set $i$ and combine them with efficient weights
based on the matched set sizes; see Rosenbaum (2007b, \S 4.1) for discussion
of these weights, which are implemented in the \texttt{senfm} function of the
\texttt{sensitivityfull} package in \texttt{R} with option \texttt{trim=Inf}.
\ 

As is always true, a $1-\alpha$ confidence interval $\mathcal{I}_{c}$ for
$\tau$ is formed by inverting a level-$\alpha$ test, so $\mathcal{I}_{c}$ is
the shortest interval of values of $\tau_{0}$ not rejected by the test; see
Lehmann and Romano (2005, \S 3) for general discussion. \ Typically, a
two-sided confidence interval is the intersection of two one-sided
$1-\alpha/2$ confidence intervals; see Shaffer (1974).

Ignoring versions of treatment, using the first match in Table
\ref{tabCounts}%
, and assuming that treatments are randomly assigned within matched sets, we
obtain a randomization-based 95\% confidence interval of $\left[
-0.308,\,0.099\right]  $ for $\tau$, that is, for a constant effect of playing
football on the number of words remembered in the delayed word recall test.
\ Because this confidence interval includes zero, the hypothesis of no effect
is not rejected at the 0.05 level. \ Because this confidence interval excludes
all $\tau$ with $\left\vert \tau\right\vert \geq1/3$, constant effects of
$\pm1/3$ word remembered have been rejected as too large. \ It is important
that \textquotedblleft no effect\textquotedblright\ is plausible, but equally
important that large effects, positive or negative, are implausible values for
a constant effect, $\tau$. \ Our goal is to avoid lengthening this interval
for $\tau$ as we explore possible versions of the control, while controlling
the family-wise error rate at $\alpha$, conventionally $\alpha=0.05$. \ This
simultaneous inference is possible if the exploration of versions of treatment
takes a specific form.

Incidentally, had we built the confidence interval for $\tau$ using the
default $M$-estimate in the \texttt{senfm} function, rather than the mean with
option \texttt{trim=Inf}, then the 95\% randomization interval for $\tau$
would have been $\left[  -0.315,0.096\right]  $. The default $M$-estimate in \texttt{senfm}\ 
corresponds to Huber's $\psi$-function, i.e., $\psi(y) = y$ for $|y|\le 1$ and
$\psi(y)=\text{sign}(y)$ for $|y| > 1$. Generally, use of robust
procedures is advisable, but we do not do so in this example to simplify its
presentation, as the robust procedures give similar answers in this
short-tailed example.

\section{Inference with versions of treatment}

\subsection{Structure of the problem}

\label{ssStructure}

With two versions of control, say \textquotedblleft playing no
sport\textquotedblright\ and \textquotedblleft playing a non-collision
sport\textquotedblright\ like baseball, each person has two potential control
responses, $Y_{ij}(0,a)$ and $Y_{ij}(0,b)$, and hence
two treatment effects, $\delta_{ij}^{{a}}=Y_{ij}(1)-Y_{ij}(0,a)$
and $\delta_{ij}^{{b}}=Y_{ij}(1)-Y_{ij}(0,b)$.\ If
$Y_{ij}(0,a)=Y_{ij}(0,b)$, $\forall i,j$, then the two
versions of control yield the same effects, $\delta_{ij}^{{a}}%
=\delta_{ij}^{{b}}$, and so the versions need not be distinguished.\
This notation for potential outcome under versions of control follows 
Vanderweele and Hernan (2013) where potential outcomes are fixed 
by {\it both} treatment and version. If there are versions, the implied
randomization distribution will have three arms when matched sets include
both versions of control, which might complicate inference. However, we 
only use the matching with sets including both versions of control to 
conduct inference under the assumption that the versions are irrelevant,
and thus the three arm randomization distribution collapses to the simpler
two arm design.

Consider the two null hypotheses about additive effects for the two versions
of control, $H_{\tau_{0}}^{{a}}:\delta_{ij}^{{a}}=\tau_{0}$,
$\forall i,j$ and $H_{\tau_{0}}^{^{{b}}}:\delta_{ij}%
^{{b}}=\tau_{0}$, $\forall i,j$. \ Here, $H_{\tau_{0}}^{{a}%
}$ might be true when $H_{\tau_{0}}^{^{{b}}}$ is false, or
conversely. \ Define $H_{\tau_{0}}$ to be the hypothesis that both
$H_{\tau_{0}}^{{a}}$ and $H_{\tau_{0}}^{^{{b}}}$ are true,
that is, $H_{\tau_{0}}:\delta_{ij}^{{a}}=\delta_{ij}^{{b}%
}=\tau_{0}$, $\forall i,j$, so the two versions of control yield the same
effect $\tau_{0}$ and need not be distinguished. \ By the definition of
$H_{\tau_{0}}$, if either $H_{\tau_{0}}^{{a}}$ or $H_{\tau_{0}%
}^{^{{b}}}$ is false, then $H_{\tau_{0}}$ is false; that is, if
there are two versions of treatment or control with different effects, then
there is not a constant effect.

It is straightforward to test $H_{\tau_{0}}^{{a}}$ or $H_{\tau_{0}%
}^{^{{b}}}$ using the methods in
\S \ref{secInferenceWithoutVersions} simply by restricting attention to
controls of one type or the other. \ These tests will be based on a smaller
sample size than the test in \S \ref{secInferenceWithoutVersions} because not
all of the controls are used. \ Moreover, if $H_{\tau_{0}}$, $H_{\tau_{0}%
}^{{a}}$ and $H_{\tau_{0}}^{^{{b}}}$ are each tested at
level $\alpha$, then the chance of at least one false rejection would
typically exceed $\alpha$ unless something is done to control the family-wise
error rate. \ Understandably, an investigator would like to avoid weakening
the inference about $H_{\tau_{0}}$ by virtue of considering $H_{\tau_{0}%
}^{{a}}$ and $H_{\tau_{0}}^{^{{b}}}$, and the question is
how to achieve the investigator's goals.

Suppose there are two versions of a constant additive treatment effect,
$\delta_{ij}^a = \tau^a$ and $\delta_{ij}^b=\tau^b$ for every $i,j$. Let $\tau_{\mathrm{\min}}=\min\left(  \tau^{{a}},\tau^{{b}%
}\right)  $ and $\tau_{\mathrm{\max}}=\max\left(  \tau^{{a}}%
,\tau^{{b}}\right)  $. \ If $\tau^{{a}}=\tau^{{b
}}=\tau$, then $\tau_{\mathrm{\min}}=\tau$ and $\tau_{\mathrm{\max}%
}=\tau$, so the versions do not matter. \ Our approach in
\S \ref{ssIntervalEstimates} is to build two confidence intervals, one
interval for $\tau$ and another interval designed to contain $\left[
\tau_{\mathrm{\min}},\tau_{\mathrm{\max}}\right]  $. \ If there is no need to
consider versions of treatment or control because $Y_{ij}(0,a)=Y_{ij}(0,b)$, 
implying that $\tau^{{a}}%
=\tau^{{b}}=\tau$, then with probability at least $1-\alpha$, both
intervals simultaneously cover the true $\tau$. \ If $\tau^{{a}}\neq
\tau^{{b}}\,$, then $H_{\tau_{0}}$ is false for every $\tau_{0}$,
but with probability at least $1-\alpha$ the second interval covers the
interval $\left[  \tau_{\mathrm{\min}},\tau_{\mathrm{\max}}\right]  $.
\ Moreover, the first interval for $\tau$ is the interval reported in
\S \ref{secInferenceWithoutVersions} ignoring versions of treatment, so the
investigator has ensured that under either the assumption of a constant treatment effect
or the assumption of versions, the two intervals he reports control the family-wise error rate at
$\alpha$, while paying no additional price in power to detect a constant effect for
consideration of versions of treatment. The inference is simultaneous in that the two
intervals provide the correct coverage for $\tau$ under the assumption of a constant effect 
and also the correct coverage for $\tau^a$ and $\tau^b$ under the presence of versions.

\subsection{Inference when there may or may not be two versions of treatment}
\label{ssIntervalEstimates}
The theory in this section is derived for one-sided intervals but, as we will see
shortly, is easily extended to the two-sided $1-\alpha$ intervals described in the 
previous section. There is a valid, one-sided $P$-value, say $P_{\tau_{0}}^{a}$, testing
$H_{\tau_{0}}^{{a}}$ against $\tau^{a}>\tau_{0}$, so that
$\Pr\left(  P_{\tau_{0}}^{a}\leq\alpha\right)  \leq\alpha$ if
$H_{\tau_{0}}^{{a}}$ is true. \ In parallel, there is a valid one-sided
$P$-value $P_{\tau_{0}}^{{b}}$, testing $H_{\tau_{0}}%
^{^{{b}}}$ against $\tau^{b}>\tau_{0}$, and a valid
one-sided $P$-value, $P_{\tau_{0}}$, testing $H_{\tau_{0}}$ against $\tau
>\tau_{0}$. \ With a slight abuse of notation, write the probability that a
random interval \ $\mathcal{I}$ contains a fixed real number $\tau$ as
$\Pr\left(  \mathcal{I}\supseteq\tau\right)  $. \ Under the assumption,
perhaps incorrect, that the there is a single version of the treatment,
$\tau^{{a}}=\tau^{{b}}=\tau$, let $\mathcal{I}^-_{c}$ be the
usual one-sided $1-\alpha$ confidence interval for $\tau$ formed by inverting
the test of $H_{\tau_{0}}$, so $\mathcal{I}^-_{c}$ is the smallest set of the
form $\left[  \widetilde{\tau},\,\infty\right)  $ containing $\left\{
\tau_{0}:P_{\tau_{0}}>\alpha\right\}  $. \ If there are no versions of
treatment, so $\tau^{{a}}=\tau^{{b}}=\tau$ for some $\tau$,
then $\Pr\left(  \mathcal{I}^-_{c}\supseteq\tau\right)  \geq\alpha$ by the
familiar duality of tests and confidence intervals; see Lehmann and Romano
(2005, Chapter 3). \ The investigator would like to report this standard
interval $\mathcal{I}^-_{c}$\ for a constant effect, without lengthening it for
multiple testing, yet would like to also say something about the possibility
that there are versions of treatment with $\tau^{{a}}\neq\tau
^{{b}}$. \ Of course, if there are versions of treatment with
$\tau^{{a}}\neq\tau^{{b}}$ then $H_{\tau_{0}}$ is false for
every $\tau_{0}$ and there is no true value of $\tau$ for $\mathcal{I}^-_{c}$ to
contain or omit.

The smallest set of the form $\left[  \widetilde{\tau},\,\infty\right)  $
containing $\left\{  \tau_{0}:P_{\tau_{0}}>\alpha\;\mathrm{or\;}P_{\tau_{0}%
}^{{a}}>\alpha\;\mathrm{or\;}P_{\tau_{0}}^{{b}}%
>\alpha\right\}  $ will be denoted $\mathcal{I}^-_{v}$. \ Of course,
$\mathcal{I}^-_{v}\supseteq\mathcal{I}^-_{c}$. \

The investigator does not know whether or not there are two versions of
treatment, whether or not $\tau^{{a}}=\tau^{{b}}$. \ The
investigator would like to make two inferences appropriate for the two
situations, $\tau^{{a}}=\tau^{{b}}$ or $\tau^{{a}}%
\neq\tau^{{b}}$. \ The investigator would like to make an
inference appropriate to this state of ignorance. \ The investigator says
\textquotedblleft I do not know whether there are two versions of treatment,
whether or not $\tau^{{a}}=\tau^{{b}}$; however, (i) if
there are not versions of treatment so that $\tau^{{a}}=\tau
^{{b}}=\tau$, then $\mathcal{I}^-_{c}\supseteq\tau$, and (ii)
whether or not there are two versions of treatment, even if $\tau^{{a}%
}\neq\tau^{{b}}$, then $\mathcal{I}^-_{v}\supseteq\tau
_{\mathrm{\min}}$; moreover, this method produces two true hypothetical
statements with probability at least $1-\alpha$.\textquotedblright%
\ \ Statement (ii) cost nothing, in the sense that $\mathcal{I}^-_{c}$ is the
usual one-sided confidence interval for $\tau$ assuming there are not versions
of treatment, yet both statements hold jointly without multiplicity
correction. \ This is established in the following proposition.

\begin{proposition}
\label{PropMain} (i) If there is only one version of treatment, $\tau
^{{a}}=\tau^{{b}}=\tau$, then $\Pr\left(  \mathcal{I}^-%
_{v}\supseteq\mathcal{I}^-_{c}\supseteq\tau\right)  \geq 1-\alpha$. \ (ii) In any
event, whether there are two versions of treatment, $\tau^{{a}}\neq
\tau^{{b}}$, or only a single version, $\tau^{{a}}%
=\tau^{{b}}=\tau$, we have $\Pr\left(  \mathcal{I}^-_{v}%
\supseteq\tau_{\mathrm{\min}}\right)  \geq 1-\alpha$.
\end{proposition}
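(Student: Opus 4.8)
The plan is to prove both parts by reducing each coverage statement to the validity of a single $P$-value evaluated at one carefully chosen value of $\tau_0$, exploiting the fact that $\mathcal{I}^-_c$ and $\mathcal{I}^-_v$ are by construction upper half-lines $[\widetilde{\tau},\infty)$ that contain their respective defining sets. The one deterministic fact I would record at the outset is that $\mathcal{I}^-_v\supseteq\mathcal{I}^-_c$ holds in every sample, because the set defining $\mathcal{I}^-_v$ contains the set defining $\mathcal{I}^-_c$ (it merely adds the disjuncts $P^a_{\tau_0}>\alpha$ and $P^b_{\tau_0}>\alpha$), so its infimum is no larger. This containment is pointwise, not merely in probability, which is what lets me collapse the event in part (i).

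For part (i), since $\mathcal{I}^-_v\supseteq\mathcal{I}^-_c$ holds surely, the event $\{\mathcal{I}^-_v\supseteq\mathcal{I}^-_c\supseteq\tau\}$ coincides with $\{\tau\in\mathcal{I}^-_c\}$, so it suffices to bound the coverage of $\mathcal{I}^-_c$. When $\tau^a=\tau^b=\tau$, the hypothesis $H_\tau$ is true, so by validity of $P_\tau$ we have $\Pr(P_\tau>\alpha)\geq 1-\alpha$. On the event $\{P_\tau>\alpha\}$ the value $\tau$ belongs to $\{\tau_0:P_{\tau_0}>\alpha\}$, hence to the smallest upper half-line containing that set, namely $\mathcal{I}^-_c$; this gives $\Pr(\tau\in\mathcal{I}^-_c)\geq 1-\alpha$, and part (i) follows at once.

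For part (ii), the key observation is that whatever the (possibly distinct) values $\tau^a,\tau^b$, at the single point $\tau_0=\tau_{\mathrm{\min}}$ at least one of the two version-specific nulls is exactly true: if the minimum is attained by version $k\in\{a,b\}$ then $\delta^k_{ij}=\tau^k=\tau_{\mathrm{\min}}$ for all $i,j$, so $H^k_{\tau_{\mathrm{\min}}}$ holds. I would then invoke validity of $P^k_{\tau_{\mathrm{\min}}}$ — which is computed from the version-$k$ subsample alone and is therefore unaffected by the value of the other version's effect — to get $\Pr(P^k_{\tau_{\mathrm{\min}}}>\alpha)\geq 1-\alpha$. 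On that event the disjunction defining $\mathcal{I}^-_v$ is satisfied at $\tau_0=\tau_{\mathrm{\min}}$, so $\tau_{\mathrm{\min}}$ lies in the defining set and hence in $\mathcal{I}^-_v$, yielding $\Pr(\tau_{\mathrm{\min}}\in\mathcal{I}^-_v)\geq 1-\alpha$.

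The step I expect to carry the real content — and the one I would state most carefully — is the identification in part (ii) that the two extra version-specific disjuncts are exactly what rescue coverage when $\tau^a\neq\tau^b$. One does not need to know which version achieves the minimum: the ``or'' structure means that whichever true null sits at $\tau_{\mathrm{\min}}$ automatically enlarges $\mathcal{I}^-_v$ enough to cover $\tau_{\mathrm{\min}}$, while the level is still controlled at $\alpha$ because only a single true null is ever relied upon. I would also flag, but not belabor, that this is precisely why no multiplicity correction is needed: each of the two conclusions is underwritten by one valid test, and the two parts correspond to mutually exclusive states of the world ($\tau^a=\tau^b$ versus $\tau^a\neq\tau^b$) rather than to several hypotheses tested at once.
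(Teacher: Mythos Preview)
Your proposal is correct and follows essentially the same approach as the paper: both arguments rest on the deterministic containment $\mathcal{I}^-_v\supseteq\mathcal{I}^-_c$ for part (i), and for part (ii) on the observation that at $\tau_0=\tau_{\mathrm{\min}}$ the version-specific null $H^k_{\tau_{\mathrm{\min}}}$ is true for whichever $k$ attains the minimum, so validity of the single $P$-value $P^k_{\tau_{\mathrm{\min}}}$ delivers coverage. The only cosmetic difference is that the paper writes out the three cases $\tau^a=\tau^b$, $\tau^a<\tau^b$, $\tau^a>\tau^b$ explicitly and argues by contrapositive, whereas you collapse them into a single ``let $k$ be the arg-min'' step.
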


\begin{proof}
By the definitions of $\mathcal{I}^-_{v}$ and $\mathcal{I}^-_{c}$, we have
$\mathcal{I}^-_{v}\supseteq\mathcal{I}^-_{c}$. \ Then (i) follows because, if
there are not versions of treatment, $\tau^{{a}}=\tau^{{b}%
}=\tau$, then $\mathcal{I}^-_{c}$ is a $1-\alpha$ confidence interval for $\tau$
and $\Pr\left(  \mathcal{I}^-_{v}\supseteq\mathcal{I}^-_{c}\supseteq\tau\right)
\geq 1-\alpha$. \ If there are not versions of treatment, $\tau^{{a}}%
=\tau^{{b}}=\tau$, then $\tau_{\mathrm{\min}}=\tau$, so
$\Pr\left(  \mathcal{I}^-_{v}\supseteq\tau_{\mathrm{\min}}\right)  \geq 1-\alpha$,
as required for (ii). \ So suppose there are two versions of treatment. \ If
$\tau^{a}=\tau_{\mathrm{\min}}<\tau_{\mathrm{\max}}=\tau^{b}$,
then $\tau_{\mathrm{\min}}\notin\mathcal{I}^-_{v}$ implies $P_{\tau^{a}%
}^{{a}}\leq\alpha$ which occurs with probability at most $\alpha$. \ If
$\tau^{b}=\tau_{\mathrm{\min}}<\tau_{\mathrm{\max}}=\tau^{a}$,
then $\tau_{\mathrm{\min}}\notin\mathcal{I}^-_{v}$ implies $P_{\tau
^{b}}^{b}\leq\alpha$ which occurs with probability at
most $\alpha$. \ So in all three cases, $\tau^{{a}}=\tau^{{b
}}$ or $\tau^{a}<\tau^{b}$ or $\tau^{a}%
>\tau^{b}$, we have $\Pr\left(  \mathcal{I}^-_{v}\supseteq
\tau_{\mathrm{\min}}\right)  \geq 1-\alpha$, proving (ii).
\end{proof}

By a parallel argument, we obtain analogous $1-\alpha$ upper intervals,
$\mathcal{I}_{c}^{+}$ and $\mathcal{I}_{v}^{+}$, of the form $\left(
-\infty,\widetilde{\tau}\right)  $ for $\tau$ if $\tau^{{a}}%
=\tau^{{b}}=\tau$ or without restrictions for $\tau_{\mathrm{\max
}}$. \ Taking the intersections, $\mathcal{I}_{c}=\mathcal{I}^-_{c}\cap\mathcal{I}_{c}^{+}$ and
$\mathcal{I}_{v}=\mathcal{I}^-_{v}\cap\mathcal{I}_{v}^{+}$, of two one-sided $1-\alpha/2$
intervals yields analogous two-sided $1-\alpha$ intervals for $\tau$ if
$\tau^{{a}}=\tau^{{b}}=\tau$ or without restrictions for the
interval $\left[  \tau_{\mathrm{\min}},\tau_{\mathrm{\max}}\right]  $. In most
cases, $\mathcal{I}_v$ can be constructed by taking the union of $\mathcal{I}_c$
and the two two-sided intervals constructed from the matched sets using each separate
version of control. When this union is disjoint, $\mathcal{I}_v$ is the shortest
interval that contains all three intervals.

In case (ii), the proof above that $\Pr\left(  \mathcal{I}_{v}\supseteq
\tau_{\mathrm{\min}}\right)  \geq 1-\alpha$ is similar to, but not quite
identical to, results in Lehmann (1952), Berger (1982) and Laska and Meisner
(1989). \ These authors proposed tests that would invert to yield as a
confidence interval the shortest interval $\mathcal{I}_{\ast}$ containing
$\left\{  \tau_{0}:P_{\tau_{0}}^{{a}}>\alpha\;\mathrm{or\;}P_{\tau_{0}%
}^{{b}}>\alpha\right\}  $, whereas $\mathcal{I}_{v}$ is the
shortest interval containing $\left\{  \tau_{0}:P_{\tau_{0}}>\alpha
\;\mathrm{or\;}P_{\tau_{0}}^{{a}}>\alpha\;\mathrm{or\;}P_{\tau_{0}%
}^{{b}}>\alpha\right\}  $, thereby ensuring $\mathcal{I}%
_{v}\supseteq\mathcal{I}_{c}$. \ Of course, $\mathcal{I}_{v}\supseteq
\mathcal{I}_{\ast}$, but unlike $\mathcal{I}_{\ast}$, our method ensures that
$\mathcal{I}_{v}$ and $\mathcal{I}_{c}$ both simultaneously cover
$\tau^{a}=\tau^{b}=\tau$ at rate $1-\alpha$ when there is
actually only a single version of treatment. \ Because $\mathcal{I}_{c}$ is
built using all of the data and under stronger assumptions, it is unlikely
that $\mathcal{I}_{\ast}$ will be much shorter than $\mathcal{I}_{v}$;
however, this logical possibility is the price for reporting the usual
interval, $\mathcal{I}_{c}$, without multiplicity correction.

Why not report a single robust interval like $\mathcal{I}_*$ instead of two
intervals, $\mathcal{I}_c$ and $\mathcal{I}_v$, that have slightly more nuanced
coverage properties? A simple example may help illustrate the advantage of reporting
$\mathcal{I}_c$ and $\mathcal{I}_v$ over $\mathcal{I}_*$. Suppose that $\mathcal{I}_*$
and $\mathcal{I}_v$ both contain zero but $\mathcal{I}_c$ does not. If we choose to
report $\mathcal{I}_*$ then we have little additional information to determine why 
$\mathcal{I}_*$ contains zero -- is Fisher's sharp null true or is the smaller of the
the two versions of effect very close to zero? Or are there versions of the effect 
with different signs? However, if we report $\mathcal{I}_c$ and $\mathcal{I}_v$ we 
have evidence against Fisher's sharp null, narrowing the plausible explanations for
why zero is contained in $\mathcal{I}_v$, which we've noted will tend to be similar to $\mathcal{I}_*$. 

\subsection{Interval estimates in the football study}

The upper third of Figure 1, marked $\Gamma=1$, shows 95\% intervals for the
football study, assuming that treatments are randomly assigned within matched
sets. \ First, there are the three conventional intervals for $\tau$,
$\tau^{a}$, and $\tau^{b}$, corresponding to the three
comparisons in Table
\ref{tabCounts}%
. \ Each of these intervals is a 95\% confidence interval on its own, but each
one runs a 5\% chance of error, so the chance that at least one interval fails
to cover its corresponding parameter is greater than 5\%. Obviously, we could
make the three intervals longer, say using the Bonferroni inequality, so that
the simultaneous coverage is 95\%, but many investigators would find this
unattractive because it would reduce the power of the conventional, primary
analysis focused on $\tau$ that uses all of the controls; that is, it would
make the first interval longer.

\begin{figure}[ht!]
    \centering
    \includegraphics[scale=0.8]{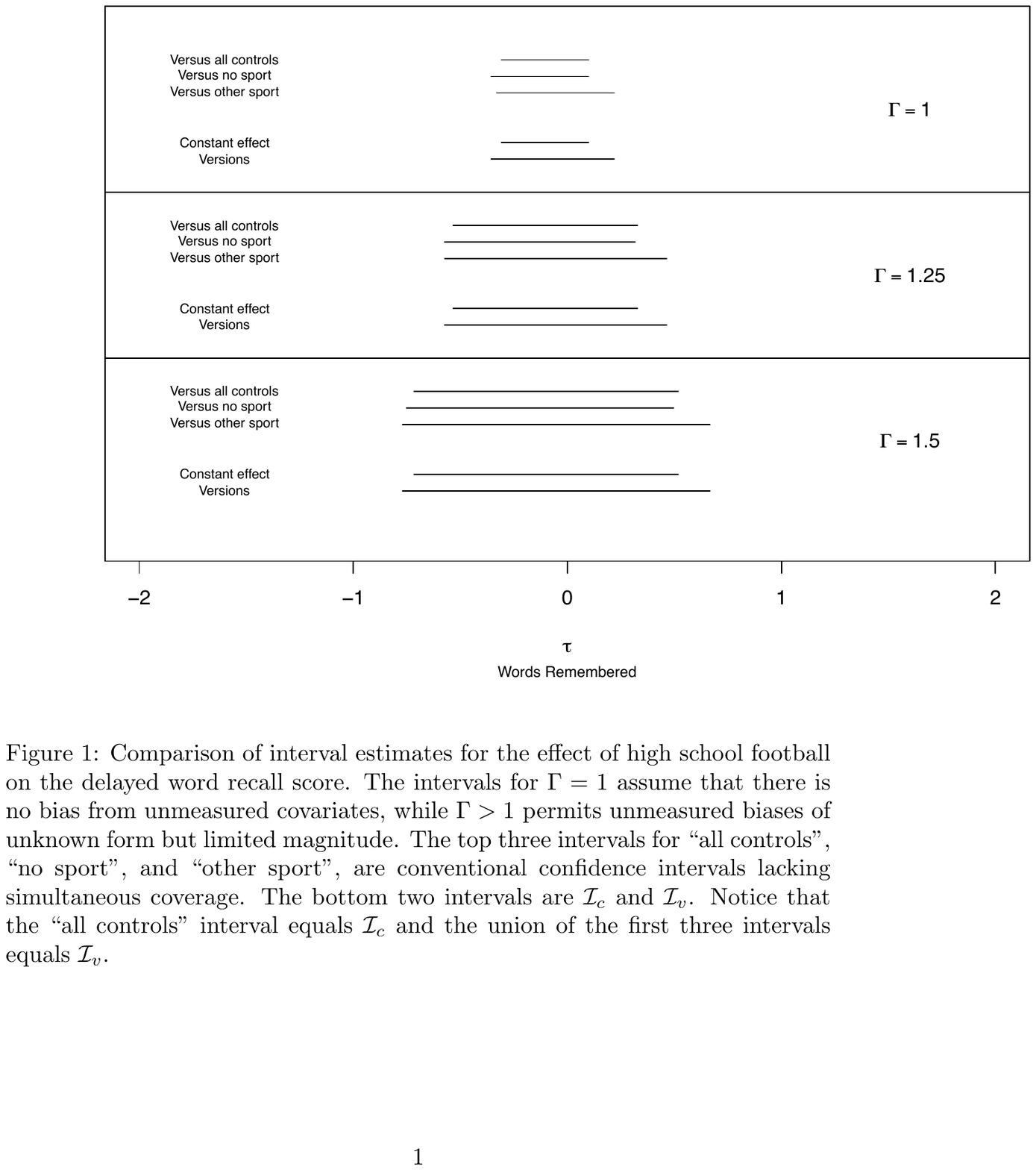}
    \caption{Comparison of interval estimates for the effect of high school football 
    on the delayed word recall score. The intervals for $\Gamma=1$ assume that there 
    is no bias from unmeasured covariates, while $\Gamma > 1$ permits unmeasured biases
    of unknown form but limited magnitude. The top three intervals for ``all controls'',
    ``no sport'', and ``other sport'' are conventional confidence intervals lacking
    simultaneous coverage. The bottom two intervals are $\mathcal{I}_c$ and $\mathcal{I}_v$.
    Noitce that the ``all controls'' interval equals $\mathcal{I}_c$ and the union of the first
    three intervals equals $\mathcal{I}_v$.}
\end{figure}

In contrast, the intervals $\mathcal{I}_{c}$ and $\mathcal{I}_{v}$ in Figure 1
have simultaneous coverage of 95\% in the sense of Proposition \ref{PropMain}.
\ Notably, $\mathcal{I}_{c}=\left[  -0.308,\,0.099\right]  $ is the interval
for $\tau$ from \S \ref{secInferenceWithoutVersions}, so consideration of
$\mathcal{I}_{v}$ has not reduced power to detect a constant effect.
\ The versions $\mathcal{I}_{v}=$ $\left[  -0.357,\,0.219\right]  $ is
slightly longer than $\mathcal{I}_{c}$, but both intervals are compatible with
no effect and both intervals are quite incompatible with an effect of half a
word, $\pm0.5$. \ For comparison, recall from \S \ref{secFootballExample} that
average performance on the delayed word recall test declined by half a word
from age 65 to age 72. \ In Figure 1, the 95\% interval for \textquotedblleft
all controls\textquotedblright\ equals $\mathcal{I}_{c}$, while $\mathcal{I}%
_{v}$ is the union of the three intervals for \textquotedblleft all
controls\textquotedblright, \textquotedblleft controls who played no
sport\textquotedblright, and \textquotedblleft controls who played another
sport\textquotedblright.

\section{Comparison to conventional approaches to multiple versions: F-tests and Bonferroni correction}

The method described in \S 5.2 makes an important trade-off: it prioritizes
the primary comparison against all controls under the assumption of a constant treatment
effect, allowing us to report the corresponding confidence interval with no correction,
in exchange for the ability to distinguish between versions if they do, in fact, exists. In 
other words, our method emphasizes detection of non-zero, constant treatment effects over 
detecting different versions of the effect. Conventional approaches to multiple comparisons,
are often less focused and are designed to detect a broader range of departures from the null.
For example, a simple Bonferroni correction places the primary comparison and the two versioned
comparisons on equal footing. If the investigator does not suspect a priori that a particular
alternative hypothesis is most likely, he may conduct an omnibus F-test, whose power is distributed
over a broad range of alternative hypotheses.

The researcher's scientifc aim should determine which method for multiple comparisons
is most appropriate. With this guidance in mind, we compare our method to the omnibus 
F-test and Bonferroni corrected intervals.
\newline

\noindent\textit{The omnibus F-test}

In exploratory analyses, the F-test can be a useful ``prelude to subsequent examinations of unplanned
contrasts" (Steiger, 2004). However, in many studies, the researcher will have a particular contrast 
in mind. Several authors have argued that the omnibus hypothesis in ANOVA studies be replaced with
hypotheses that focus on a substantive research question, often involving just a single
contrast (Rosenthal et al., 2000; Steiger, 2004). In the football study, we suspect a priori 
that versions are not terribly consequential and proceed first with our primary investigation of 
whether playing high school football accelerates the onset of cognitive decline. The hypotheses about
versions are secondary to our main inquiry and are treated as such in our method.

The F-test does not lend itself to effect size estimates, but we can compare it to our method 
by evaluating it's power to reject Fisher's sharp null hypothesis and the probability that $\mathcal{I}_c$
excludes $\tau=0$ under a variety of alternative hypotheses with vary degrees of ``versioning." If the 
primary goal of the study is less directed and detecting any departure from the null of no effect is of interest
and the researcher suspects that versions may play an important role, the F-test may be more suitable.
However, in a simulation study described in the Appendix, we find that when the versions 
differ by less than $\sim 30-40\%$ in magnitude and have the same sign our method has better power to 
reject Fisher's sharp null than the F-test. Being an omnibus test, the F-test power is much less sensitive
to the specific pattern of alternative hypothesis.
\newline

\noindent\textit{Bonferroni corrected intervals}

In Figure 2, we compare the intervals returned by our version method to the three intervals returned
by a Bonferroni correction (BC) to the comparison of the football players against all controls and each 
version of control. In the top panel, the BC interval using all controls is 22\% longer than $\mathcal{I}_c$.
In the bottom panel, $\mathcal{I}_v$ is only 3\% longer than the BC interval using only ``no sport'' controls
and 15\% \textit{shorter} than the BC interval using ``other sport" controls. In this particular case,
where versions appear to be relatively innocuous, the cost of lengthening the primary interval is significant 
for what apears to be little if any gain from investigating each version separately. Bonferroni correction may
be more appropriate when the investigator suspects the versions are important, and he may even replace the
primary comparison with a comparison of the two versions of control themselves. However, if the versions are 
only modestly different, for which our method is designed, it is unlikely that Bonferroni would have sufficient
power to detect such modest differences in the versions of control.

\begin{figure}[ht!]
    \centering
    \includegraphics[scale=0.6]{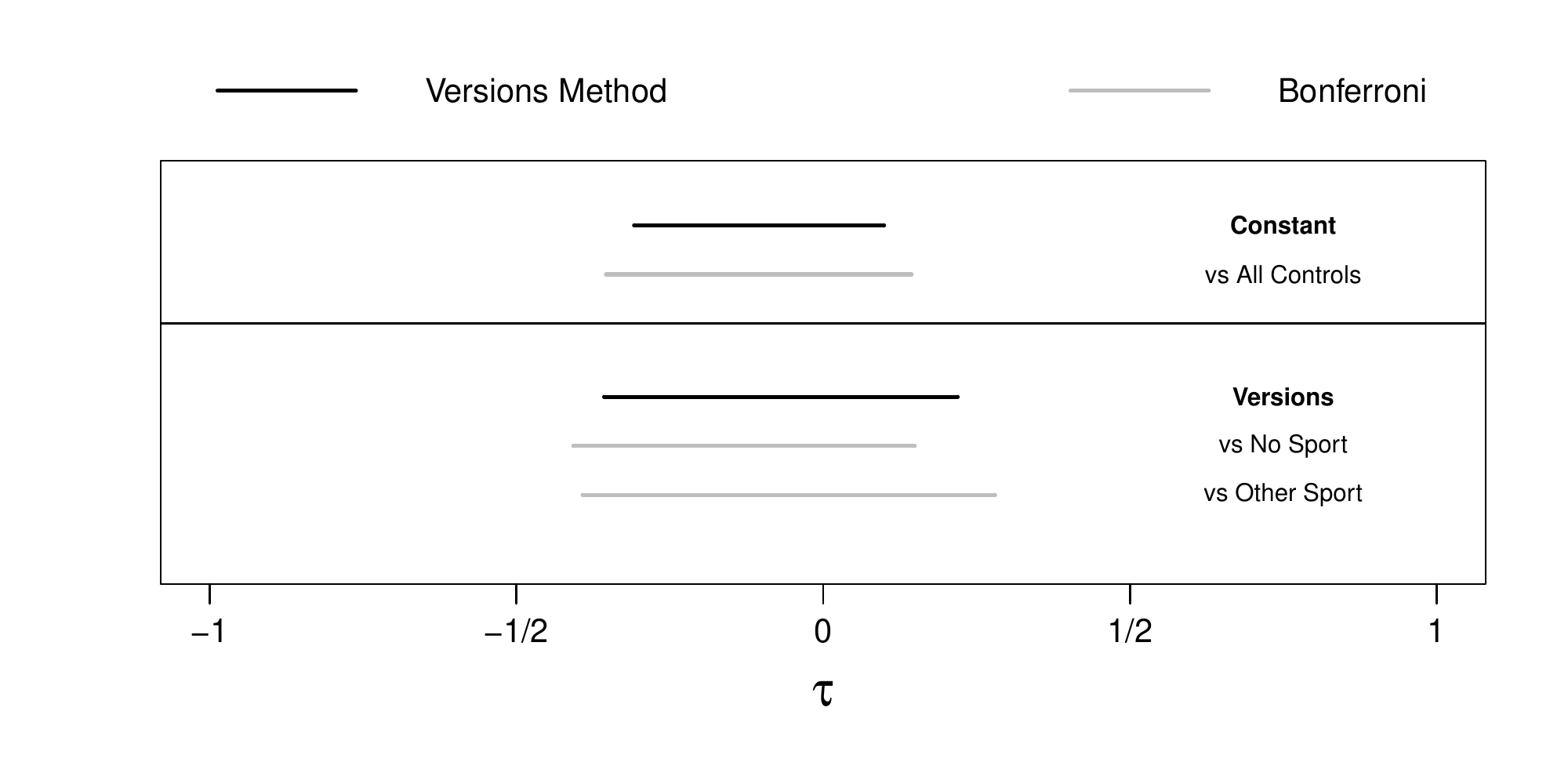}
    \caption{Comparison of $\mathcal{I}_c$ (top panel, dark line) and $\mathcal{I}_v$ (bottom panel, dark line)
    to Bonferroni corrected (BC) intervals comparing football players vs. ``all controls'' (top panel, light line), 
    vs. ``no sport'' controls (bottom panel, top-most light line), and vs. ``other sport'' controls 
    (bottom panel, bottom-most light line). Notice that $\mathcal{I}_v$ is very similar in length to the BC interval
    using ``no sport'' controls and is noticeably shorter than the BC interval using ``other sport'' controls.}
\end{figure}
\section{Versions, effect modification, and insufficient overlap}

In the study of the effects of playing high school football on cognitive decline, the two versions of control
and the football players have significant overlap in observed covariates. An anonymous reviewer suggested the
following situation. Suppose that athletes who did not play football and non-athletes who did not play football
differ noticeably on observed covariates such that a different set of football players were matched in the matchings 
using the different versions of control. If the treatment is heterogeneous, say it is modified by some observed
covariate that differs between the two versions of control, does ``this mean there are two versions of treatment
or that there are two groups of individuals that are involved in the two comparisons." In this setting, the potential 
for treatment heterogeneity is aliased with the potential for versions of the treatment effect. But does this matter?
The logic of \S 5.2 is agnostic to why there may be different treatment effects between versions. Thus, $\mathcal{I}_v$
can be interpreted as assessing how robust our primary analysis is to the existence of versions of treatment {\it or} to 
the existence of effect heterogeneity between the two groups defined by ``versions" of control.

\section{Sensitivity to departures from random assignment}

\label{secSensitivity}

So far, we have drawn inferences under the assumption that treatments are
randomly assigned within matched sets. \ In an observational study, this
assumption lacks support and is typically doubtful if not implausible. \ We
examine sensitivity to bias from nonrandom assignment by assuming that two
individuals with the same observed covariates may differ in their odds of
treatment by at most a factor of $\Gamma\geq1$ due to differences in
unobserved covariates; see Rosenbaum (2007b; 2017, \S 9). \ This yields
hypothesis tests that falsely reject a true null hypothesis with probability
at most $\alpha$ when the bias in treatment assignment is at most $\Gamma$.
\ Then $\Gamma$ is varied to display the magnitude of bias that would need to
be present to alter the conclusions of a study. \ How much bias, measured by
$\Gamma$, would need to be present to lead us to fail to reject the null
hypothesis of no effect of football when, in fact, football causes substantial harm?
In the current example where there is no evidence of a harmful effect of football,
we may ask a parallel question that is related to equivalence testing: 
how much bias would need to be present to mask a substantial true effect of football 
on memory? For example, an increase or decrease of a DWR score by at least one word.

Aids to interpreting values of $\Gamma$ are discussed by Rosenbaum and Silber
(2009) and Hsu and Small (2013). \ In particular, in a matched pair with
$n_{i}=2$, the value $\Gamma=1.25$ corresponds with an unobserved covariate
that doubles the odds of playing football and doubles the odds of a worse
memory score, while $\Gamma=1.5$ corresponds with an unobserved covariate that
doubles the odds of playing football and quadruples the odds of a worse memory
score; see Rosenbaum and Silber (2009) and Rosenbaum (2017, \S 9).
\ Proposition \ref{PropMain} applies to the intervals obtained from upper
bounds on $P$-values from sensitivity analyses, providing the bias in
treatment assignment is at most $\Gamma$.

Figure 1 shows the expansion of $\mathcal{I}_{c}$ and $\mathcal{I}_{v}$ as
$\Gamma$ increases from $\Gamma=1$ for randomization inferences to
$\Gamma=1.25$ and $\Gamma=1.5$. \ For $\Gamma=1.25$, the intervals are
$\mathcal{I}_{c}=\left[  -0.534,\,\,0.328\right]  $ and $\mathcal{I}%
_{v}=\left[  -0.574,\,0.464\right]  $. \ For $\Gamma=1.5$, the intervals are
$\mathcal{I}_{c}=\left[  -0.716,\,\,0.517\right]  $ and $\mathcal{I}%
_{v}=\left[  -0.771,\,0.666\right]  $. \ A bias of $\Gamma=1.5$ together with
two versions of not playing football would be insufficient to mask an effect
of one word on the memory test, $\pm1$. \ At $\Gamma=2$, not shown in Figure
1, effects of $\pm1$ word start to be included in the confidence intervals,
with $\mathcal{I}_{c}=\left[  -0.997,\,0.817\right]  $ and $\mathcal{I}%
_{v}=\left[  -1.082,\,0.986\right]  $. \ A bias of $\Gamma=2$ corresponds with
an unobserved covariate that triples the odds of playing football and
increases the odds of worse memory performance by five-fold.

In brief, there is no sign of an effect of high school football on memory
scores. \ Could the absence of any sign of an effect reflect a substantial
effect and bias in who plays football? \ To mask a true effect of $\pm1$ word,
an unobserved bias would have to be moderately large, $\Gamma=2$. \ Even
allowing for both moderate confounding due to unmeasured covariates and
versions of treatment, large effects of high school football on memory scores
are not consistent with the data.

\section{Discussion: Simultaneous inference about one question under different
assumptions}

Investigators sometimes candidly report two or more statistical analyses valid
under different assumptions. \ In the process, they often lose the several
advantages of a single, simple, primary analysis, that is, a single analysis
with high power against a non-zero constant effect 
because it uses everyone and avoids needed corrections for
multiple testing when several statistical tests are performed. With less
candor, investigators sometimes perform several analyses and report some but
not all analyses, a perhaps common practice that no one would publicly advocate.

Versions of treatment arise in observational studies when treatment or control
conditions found in available data may not be uniform, as they would be in a
tightly controlled experiment. \ The investigator would like follow the
practice of clinical trials and report a single, primary analysis using
everyone without multiplicity correction. \ Nonetheless, the investigator
would like to speak to the possibility that there are versions of treatment or
control conditions. \ The proposed method always reports two interval
estimates. \ The first, shorter interval, $\mathcal{I}_{c}$, is precisely the
interval that would be reported in a single primary analysis without versions
of treatment. \ The second longer interval, $\mathcal{I}_{v}$, attempts to
cover both treatment effects if there are two versions of treatment or two
versions of control. \ If there is, in fact, only a single treatment effect,
the same for both versions, then the probability that both $\mathcal{I}_{c}$
and $\mathcal{I}_{v}$ simultaneously cover that one effect is the stated rate
of $1-\alpha$. \ If there are, in fact, two treatment effects that differ with
the two versions, then the second interval, $\mathcal{I}_{v}$, covers both
effects with the stated rate of $1-\alpha$. \ In that sense, the added
information provided by reporting two intervals, $\mathcal{I}_{c}\ $and
$\mathcal{I}_{v}$, is free: the interval $\mathcal{I}_{c}$ is clarified but
not lengthened by examining $\mathcal{I}_{v}$. \ Although $\mathcal{I}_{v}$ is
always somewhat longer than $\mathcal{I}_{c}$, in the football example it is
only slightly longer, thereby suggesting that the primary analysis is not
greatly distorted by the two versions of the control condition.%

\section*{References}%
%

\setlength{\hangindent}{12pt}
\noindent
Bachynski, K. E. (2016), \textquotedblleft Tolerable risks? physicians and
tackle football,\textquotedblright\ \textit{New England Journal of Medicine},
374, 405-407.%

\setlength{\hangindent}{12pt}
\noindent

\setlength{\hangindent}{12pt}
\noindent
Berger, R. L. (1982), \textquotedblleft Multiparameter hypothesis testing and
acceptance sampling,\textquotedblright\ \textit{Technometrics}, 24, 295-300.%

\setlength{\hangindent}{12pt}
\noindent
Campbell, D. T. (1969), \textquotedblleft Prospective: Artifact and
Control,\textquotedblright\ in R. Rosenthal and R. L. Rosnow, eds.,
\textit{Artifact in Behavioral Research}, New York: Academic Press, 351-382.%

\setlength{\hangindent}{12pt}
\noindent
Conover, W. J. and Salsburg, D. S. (1988), \textquotedblleft Locally most
powerful tests for detecting treatment effects when only a subset of patients
can be expected to respond to treatment,\textquotedblright%
\ \textit{Biometrics}, 44, 189-196.%

\setlength{\hangindent}{12pt}
\noindent
Deshpande, S.K., Hasegawa, R.B., Rabinowitz, A.R., Whyte, J., Roan, C.L.,
Tabatabaei, A., Baiocchi, M., Karlawish, J.H., Master, C.L., and Small, D.S.
(2017), \textquotedblleft High school football and later life cognition and
mental health: An observational study,\textquotedblright\ \textit{JAMA
Neurology}, 74, 909-918. \ Protocol for study 2016: \textsf{arXiv preprint
arXiv:1607.01756}.%

\setlength{\hangindent}{12pt}
\noindent
Fisher, R. A. (1935), \textit{The Design of Experiments}, Edinburgh: Oliver \& Boyd.%

\setlength{\hangindent}{12pt}
\noindent
Graves, A. B., White, E., Koepsell, T., Reier, B. V., Van Belle, G., Larson,
E. B., and Raskind, M. (1990), \textquotedblleft The association between head
trauma and Alzheimer's disease,\textquotedblright\ \textit{American Journal of
Epidemiology}, 131, 491-501.%

\setlength{\hangindent}{12pt}
\noindent
Hansen, B. B. (2004), \textquotedblleft Full matching in an observational
study of coaching for the SAT,\textquotedblright\ \ \textit{Journal of the
American Statistical Association}, 99, 609-618.%

\setlength{\hangindent}{12pt}
\noindent
Hansen, B. B. and Klopfer, S. O. (2006), \textquotedblleft Optimal full
matching and related designs via network flows,\textquotedblright%
\ \textit{Journal of Computational and Graphical Statistics}, 15, 609-627.
(\texttt{R} package \texttt{optmatch})%

\setlength{\hangindent}{12pt}
\noindent
Hansen, B. B. (2007), \textquotedblleft Flexible, optimal matching for
observational studies,\textquotedblright\ \textit{R News}, 7, 18-24.
(\texttt{R} package \texttt{optmatch})%

\setlength{\hangindent}{12pt}
\noindent
Hsu, J. Y. and Small, D. S. (2013), \textquotedblleft Calibrating sensitivity
analyses to observed covariates in observational studies,\textquotedblright%
\ \textit{Biometrics}, 69, 803-811.%

\setlength{\hangindent}{12pt}
\noindent
Knopman, D. S. and Ryberg, S. (1989), \textquotedblleft A verbal memory test
with high predictive accuracy for dementia of the Alzheimer
type,\textquotedblright\ \textit{Archives of Neurology}, 46, 141-145.%

\setlength{\hangindent}{12pt}
\noindent
Laska, E. M. and Meisner, M. J. (1989), \textquotedblleft Testing whether an
identified treatment is best,\textquotedblright\ \textit{Biometrics}, 45, 1139-1151.%

\setlength{\hangindent}{12pt}
\noindent
Lehman, E. J., Hein, M. J., Baron, S. L., and Gersic, C. M. (2012),
\textquotedblleft Neurodegenerative causes of death among retired national
football league players,\textquotedblright\ \textit{Neurology}, 79, 1970-1974.%

\setlength{\hangindent}{12pt}
\noindent
Lehmann, E. L. (1952), \textquotedblleft Testing multiparameter
hypotheses,\textquotedblright\ \textit{Annals of Mathematical Statistics}, 23, 541-552.%

\setlength{\hangindent}{12pt}
\noindent
Lehmann, E. L. and Romano, J. (2005), \textit{Testing Statistical Hypotheses}
(3$^{rd}$ edition), New York: Springer.%

\setlength{\hangindent}{12pt}
\noindent
Miles, S. H. and Prasad, S. (2016), \textquotedblleft Medical ethics and
school football,\textquotedblright\ \textit{American Journal of Bioethics},
16, 6-10.%

\setlength{\hangindent}{12pt}
\noindent
McKee, A. C., Cantu, R. C., Nowinski, C. J., Hedley-Whyte, T., Gavett, B. E.,
Budson, A. E., Santini, V. E., Lee, H.-S., Kublius, C. A., and Stern, R. A.
(2009), \textquotedblleft Chronic traumatic encephalopathy in athletes:
progressive tauopathy after repetitive head injury,\textquotedblright%
\ \textit{Journal of Neuropathology and Experimental Neurology}, 68, 709-735.%

\setlength{\hangindent}{12pt}
\noindent
Mortimer, J. A., van Duijn, C. M., Chandra, V., Fratiglioni, L., Graves, A.
B., Heyman, A., Jorm, A. F., Kokmen, E., Kondo, K., Rocca, W. A., Shalat, S.
L., Soininen, H., and for the Eurodem Risk Factors Research Group (1991),
\textquotedblleft Head trauma as a risk factor for alzheimer's disease: a
collaborative re-analysis of case-control studies,\textquotedblright%
\ \textit{International Journal of Epidemiology}, 20, S28-S35.%

\setlength{\hangindent}{12pt}
\noindent
Neyman, J. (1923, 1990), \textquotedblleft On the application of probability
theory to agricultural experiments,\textquotedblright\ {\itshape{Statistical
Science}}, 5, 463-480.%

\setlength{\hangindent}{12pt}
\noindent
Peto, R., Pike, M., Armitage, P., Breslow, N. E., Cox, D. R., Howard, S.V.,
Mantel, N., McPherson, K., Peto, J. and Smith, P.G. (1976), \textquotedblleft
Design and analysis of randomized clinical trials requiring prolonged
observation of each patient. I. Introduction and design,\textquotedblright%
\ \textit{British Journal of Cancer}, 34, 585-612.%

\setlength{\hangindent}{12pt}
\noindent
Pitman, E. J. (1937), \textquotedblleft Statistical tests applicable to
samples from any population,\textquotedblright\ \textit{Journal of the Royal
Statistical Society}, 4, 119-130.%

\setlength{\hangindent}{12pt}
\noindent
Rosenbaum, P. R. (1984), \textquotedblleft The consquences of adjustment for a
concomitant variable that has been affected by the
treatment,\textquotedblright\ \textit{Journal of the Royal Statistical
Society}, A, 147, 656-666.%

\setlength{\hangindent}{12pt}
\noindent
Rosenbaum, P. R. (1987), \textquotedblleft The role of a second control group
in an observational study,\textquotedblright\ \textit{Statistical Science}, 2, 292-306.%

\setlength{\hangindent}{12pt}
\noindent
Rosenbaum, P. R. (1991), \textquotedblleft A characterization of optimal
designs for observational studies,\textquotedblright\ \textit{Journal of the
Royal Statistical Society} B, 53, 597-610.%

\setlength{\hangindent}{12pt}
\noindent
Rosenbaum, P. R. (2007a), \textquotedblleft Confidence intervals for uncommon
but dramatic responses to treatment,\textquotedblright\ \textit{Biometrics},
63, 1164-1171.%

\setlength{\hangindent}{12pt}
\noindent
Rosenbaum, P. R\textsc{.} (2007b), \textquotedblleft Sensitivity analysis for
m-estimates, tests and confidence intervals in matched observational
studies,\textquotedblright\ \textit{Biometrics}, 63, 456-464. \ (\texttt{R}
package \texttt{sensitivitymult}; demonstration at
\textsf{https://rosenbap.shinyapps.io/learnsenShiny/})%

\setlength{\hangindent}{12pt}
\noindent
Rosenbaum, P. R. and Silber, J. H. (2009), \textquotedblleft Amplification of
sensitivity analysis in observational studies,\textquotedblright%
\ \textit{Journal American Statistical Association,} 104, 1398-1405.
\ (\texttt{amplify} function in the \texttt{R} package
\texttt{sensitivitymult})%

\setlength{\hangindent}{12pt}
\noindent
Rosenbaum, P. R. (2017), \textit{Observation and Experiment: An Introduction
to Causal Inference}, Cambridge, MA: Harvard University Press.%

\setlength{\hangindent}{12pt}
\noindent
Rosenthal, R., Rosnow, R. L., and Rubin, D. B. (2000), \textit{Contrasts
and effect sizes in behavioral research: A correlational approach},
New York, NY: Cambridge University Press.

\setlength{\hangindent}{12pt}
\noindent
Rubin, D. B. (1974), \textquotedblleft Estimating causal effects of treatments
in randomized and nonrandomized studies,\textquotedblright%
\ {\itshape{Journal of Educational
Psychology}}, 66, 688-701.%

\setlength{\hangindent}{12pt}
\noindent
Rubin, D. (1986), \textquotedblleft Comment: Which ifs have causal
answers,\textquotedblright\ \textit{Journal of the American Statistical
Association}, 81, 961-962.%

\setlength{\hangindent}{12pt}
\noindent
Rubin, D. B. (2007), \textquotedblleft The design versus the analysis of
observational studies for causal effects: parallels with the design of
randomized trials,\textquotedblright\ \textit{Statistics in Medicine}, 26, 20-36.%

\setlength{\hangindent}{12pt}
\noindent
Shaffer, J. P. (1974), \textquotedblleft Bidirectional unbiased
procedures,\textquotedblright\ \textit{Journal of the American Statistical
Association}, 69, 437-439.%

\setlength{\hangindent}{12pt}
\noindent
Steiger, J. H. (2004), \textquotedblleft Beyond the F Test: Effect Size Confidence
Intervals and Tests of CLose Fit in the Analysis of Variance and Contras Analysis, \textquotedblright\
\textit{Psychological Methods}, 9(2), 164-182.

\setlength{\hangindent}{12pt}
\noindent
Stuart, E. A. and Green, K. M. (2008), \textquotedblleft Using full matching
to estimate causal estimates in nonexperimental studies: examining the
relationship between adolescent marijuana use and adult
outcomes,\textquotedblright\ \textit{Developmental Psychology}, 44, 395-406.%

\setlength{\hangindent}{12pt}
\noindent
VanderWeele, T. J. and Hernan, M. A. (2013), \textquotedblleft Causal Inference
Under Multiple Versions of Treatment, \textquotedblright\ \textit{Journal of Causal Inference},
1(1), 1-20.

\setlength{\hangindent}{12pt}
\noindent
Welch, B. L. (1937), \textquotedblleft On the z-test in randomized blocks and
Latin squares,\textquotedblright\ \ \textit{Biometrika}, 29, 21-52.\newpage%


\section*{Appendix: Simulation comparing the power of the omnibus F-test to $\mathcal{I}_c$}
In this appendix, we compare the power of the omnibus F-test to the power of our version method to detect
departures from Fisher's sharp null hypothesis under varying degrees of ``versioning." The results can be
found in Table 2. When the degree of versioning is modest, the version method is more powerful than the F-test.
\newline

\begin{table}[ht!]
    \centering
    \begin{tabular}{l|rr|rr}
        \hline
        \multicolumn{1}{c|}{Version}& \multicolumn{2}{c}{$\tau^b = 0.25$} & \multicolumn{2}{|c}{$\tau^b = 0.4$}\\
        \hline
        & $P(\mathcal{I}_c \not\supseteq 0)$ & $P(F>c_{\alpha})$ & $P(\mathcal{I}_c \not\supseteq 0)$ & $P(F>c_{\alpha})$ \\
        \hline
        $\tau^a = \tau^b$ &0.61&0.49&0.94&0.91 \\
        $\tau^a = 0.95\times \tau^b$ &0.58&0.47&0.93&0.88 \\
        $\tau^a = 0.9\times \tau^b$ &0.56&0.47&0.93&0.89 \\
        $\tau^a = 0.75\times \tau^b$ &0.51&0.43&0.89&0.85 \\
        $\tau^a = 0.65\times \tau^b$ &0.44&0.42&0.83&0.84 \\
        $\tau^a = 0.6\times \tau^b$ &0.44&0.44&0.83&0.86 \\
        $\tau^a = 0.5\times \tau^b$ &0.37&0.41&0.75&0.84 \\
        $\tau^a = 0.25\times \tau^b$ &0.27&0.55&0.58&0.95 \\
        \hline
    \end{tabular}
    \caption{Comparison of the power of the version method (columns 2 and 4) and the power of the 
    F-test (columns 3 and 5) to reject Fisher's sharp null hypothesis under alternative hypotheses with
    varying degrees of ``versioning." Significance level of the tests are $\alpha=0.05$}
\end{table}

\noindent\textit{Simulation settings}

Let there be $I = 100$ matched sets, with $m_i=1$ treated and $n_i-m_i=4$ controls in each set. The design is 
balanced over versions, i.e., there are two controls of each version in each matched set. In each set we generate
outcomes as follows: $Y_{ij} = \tau + X_i + \epsilon_{ij}$ if the $j$-th subject in set $i$ receives treatment,
$Y_{ij} = X_i + \epsilon_{ij}$ if the $j$-th subject receives control version $b$, and $Y_{ij} = \delta + X_i + \epsilon_{ij}$ if the 
$j$-th subject receives control version. We let the individual and matched-set level terms, $\epsilon_{ij}$ and $X_i$,
be distributed as independent standard normals. Finally, let $\tau^b = \tau$ and $\tau^a = \tau-\delta$. If there are no versions, then
$\delta = 0$. When conducting the F-test of the null of no treatment effect, we model $X_i$ as a linear effect.
\end{document}